\let\doendproof\endproof
\renewcommand\endproof{~\hfill$\qed$\doendproof}
\newcommand{\Runs}{\mathit{runs}}
\newcommand{\PS}{\mathit{psq}}
\begin{document}
\title{Lyndon Words, the Three Squares Lemma, and Primitive Squares}
\author{
  Hideo~Bannai\inst{1}\orcidID{0000-0002-6856-5185}\and
  Takuya~Mieno\inst{2,3}\orcidID{0000-0003-2922-9434} \and
  Yuto~Nakashima\inst{2}\orcidID{0000-0001-6269-9353}
}

\authorrunning{H. Bannai et al.}
\institute{
  M\&D Data Science Center, Tokyo Medical and Dental University, Tokyo, Japan\\
\email{hdbn.dsc@tmd.ac.jp}  \and
  Department of Informatics, Kyushu University, Fukuoka, Japan\\
\email{\{takuya.mieno,yuto.nakashima\}@inf.kyushu-u.ac.jp} \and
  {Japan Society for the Promotion of Science, Tokyo, Japan}
}
\maketitle

\begin{abstract}
  We revisit the so-called ``Three Squares Lemma''
  by Crochemore and Rytter [Algorithmica 1995]
  and, using arguments based on Lyndon words,
  derive a more general variant
  which considers three overlapping
  squares which do not necessarily share a common prefix.
  We also give an improved upper bound of $n\log_2 n$ on the
  maximum number of (occurrences of) primitively
  rooted squares in a string of length $n$,
  also using arguments based on Lyndon words.
  To the best of our knowledge, the only known upper bound
  was $n \log_\phi n \approx 1.441n\log_2 n$,
  where $\phi$ is the golden ratio,
  reported by Fraenkel and Simpson [TCS 1999]
obtained via the Three Squares Lemma.
\end{abstract}

\section{Introduction}
Periodic structures of strings have been and still are one of
the most important and fundamental objects
of study in the field of combinatorics on words~\cite{Berstel:2007fb},
and the analysis and exploitation of their combinatorial properties
are a key ingredient in
the development of efficient string processing algorithms~\cite{DBLP:journals/siamcomp/KnuthMP77,DBLP:conf/stoc/KempaK19}.

In this paper, we focus on {\em squares}, which are strings of the form $u^2~(= uu)$ for some string $u$, which is called the {\em root} of the square.
A well known open problem concerning squares is on the maximum number of distinct squares that can be contained in a string.
Fraenkel and Simpson~\cite{Fraenkel:1998wi} showed that the maximum number of distinct square substrings of a string of length $n$ is at most $2n$.
Although slightly better upper bounds of $2n-\Theta(\log n)$~\cite{Ilie:2007hv} and $\frac{11}{6}n$~\cite{Deza:2015ih} have been shown,
it is conjectured that it is at most $n$~\cite{Fraenkel:1998wi},
with a best known lower bound of $n - o(n)$~\cite{Fraenkel:1998wi}.

The ``Three Squares Lemma'' by Crochemore and Rytter~\cite{crochemore1995squares}
was the key lemma used by Fraenkel and Simpson to obtain the upper bound of $2n$.
\begin{lemma}[Three Squares Lemma~(Lemma 10 of~\cite{crochemore1995squares}\footnote{In~\cite{crochemore1995squares},
      $u, v, w$ are all assumed to be primitive and
      $|u| > |v|+|w|$ was claimed, but it was noted in~\cite{Fraenkel:1998wi} that
      only primitivity of $w$ is required, and that
      $|u| \geq |v|+ |w|$ is the correct relation,
      giving $u=01001010$, $v=01001$, and $w=010$ as an example when $|u| = |v|+|w|$.
    })]\label{lem:tslorig}
  Let $u^2$, $v^2$, $w^2$ be three prefixes of some string
  such that $w$ is primitive and $|u| > |v| > |w|$.
  Then, $|u| \geq |v| + |w|$.
\end{lemma}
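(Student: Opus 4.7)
My plan is to argue by contradiction: assume $|u| < |v|+|w|$ and derive that $w$ must fail to be primitive. The case $|u|\ge 2|v|$ is immediate, since then $|u|\ge 2|v| > |v|+|w|$ (using $|v|>|w|$). So I may suppose $|v|<|u|<2|v|$ and set $r:=|u|-|v|$, with $0<r<|w|<|v|$.

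The next step is to show that $v$ has period $r$. Since $v^2$ is a prefix of $u^2$ and $|u|\le 2|v|$, the string $u$ has period $|v|$, so $u=v\cdot v[1..r]$; consequently $u^2=uu$ contains an occurrence of $v$ starting at position $|u|+1$. Comparing this with the occurrence of $v$ at position $|v|+1$ coming from $v^2$ being a prefix, the overlap (of length $|v|-r>0$) forces $v[i+r]=v[i]$ for $1\le i\le |v|-r$, i.e., $v$ has period $r$. In particular, $w$ inherits period $r$ as a prefix of $v$.

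To conclude, I would combine the period $r$ with the periodicity inherited from $w^2$ being a prefix, via Fine and Wilf's periodicity theorem, to force a period of $w$ that properly divides $|w|$---contradicting primitivity. I would split on whether $2|w|\le|v|$. If so, then $w^2$ is itself a prefix of $v$ and therefore has period $r$; applying Fine--Wilf to $w^2$ with periods $r$ and $|w|$ (the length hypothesis $2|w|\ge r+|w|-\gcd(r,|w|)$ is immediate since $r<|w|$) yields that $w$ has period $\gcd(r,|w|)\le r<|w|$, a proper divisor of $|w|$, contradicting primitivity. If $2|w|>|v|$, then $w^2$ straddles the boundary between the two copies of $v$ in $v^2$; unfolding this equality shows that $v$ has period $|w|$ and that $w$ picks up a second period $s:=|v|-|w|$. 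A further application of Fine--Wilf---either to $v$ (with periods $r$ and $|w|$) or to $w$ (with periods $r$ and $s$), exploiting the identity $r+s=|u|-|w|$ coming from the contradiction assumption---should again force a proper period of $w$ dividing $|w|$.

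The main obstacle I anticipate is the second case, in the sub-subcase where neither direct application of Fine--Wilf has its length hypothesis satisfied. There I expect to chain the available congruences $v[i+r]=v[i]$ and $v[i+|w|]=v[i]$ around the prefix $w$ of $v$, using an Euclidean-style combination of the two periods, until an equality of two distinct letters of $w$ is forced, which again contradicts the primitivity of $w$.
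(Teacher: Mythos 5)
Your proposal follows the classical Fine--Wilf route (essentially the Crochemore--Rytter/Fraenkel--Simpson line), whereas the paper deliberately avoids the Periodicity Lemma altogether and argues via Lyndon words: it first shows (Lemma~\ref{lem:doublesquareroots}) that the L-root intervals of nested squares are either disjoint or prefix-ordered, and then closes the proof by a four-case count on the starting positions of $r_{w^2}, r_{v^2}, r_{u^2}$, using $|r_{x^2}|\geq|x|$ and the fact that primitivity of $w$ forces $|\lambda_{w^2}|=|w|$. So the two approaches are genuinely different; the issue is that yours is not complete.

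The reduction is fine up to and including the derivation that $v$ has period $r=|u|-|v|$ and the case $2|w|\le|v|$ (there Fine--Wilf on $w^2$ with periods $r$ and $|w|$ does yield a period $\gcd(r,|w|)$ that \emph{divides} $|w|$, contradicting primitivity). The gap is the case $2|w|>|v|$, and it is not a routine one: this is precisely where the content of the lemma lives. Two distinct problems arise. First, when you apply Fine--Wilf to $w$ with periods $r$ and $s=|v|-|w|$, the conclusion is only that $\gcd(r,s)$ is a period of $w$, and a proper period that does not divide $|w|$ is perfectly compatible with primitivity (e.g.\ $ababa$ is primitive with period $2$); your claim that this ``forces a proper period of $w$ dividing $|w|$'' is unjustified for that branch. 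Second, in the sub-case where the length hypothesis fails for both applications (roughly $s<r-\gcd(r,|w|)$), ``chaining the congruences Euclidean-style'' is exactly the proof of Fine--Wilf, so it cannot extract more than Fine--Wilf does from the same hypotheses; the known proofs get past this point only by invoking an additional ingredient, namely the synchronization property of the primitive word $w$ (that $w$ occurs in $w^2$ only at the two trivial positions), applied to the occurrences of $w$ at positions $|v|+1$ and $|u|+1$ inside the $w$-periodic prefix. Without that extra input, the accumulated periods alone do not yield a contradiction, so the proof as proposed does not go through.
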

Crochemore and Rytter
further showed
that the lemma implies that
the number of primitively rooted squares that can
start at any given position of a string is bounded by
$\log_\phi |x|$, where $\phi = (1+\sqrt{5})/2$ is the golden ratio (Theorem 11 of~\cite{crochemore1995squares}).
Thus, it follows that
the maximum number $\PS(n)$ of occurrences of primitively rooted squares in a string of length $n$
is less than $n\log_\phi n \approx 1.441n\log_2 n$.

The original proof of the Three Squares Lemma
by Crochemore and Rytter was based on the
well known ``Periodicity Lemma'' by Fine and Wilf~\cite{periodicity_lemma}.
Concerning a similar problem on
the maximum number of
``runs'' (maximally periodic substring occurrences such that
the smallest period is at most half its length) that can be
contained in a string,
the Periodicity Lemma was also the tool of choice in its analysis~\cite{814634,Rytter2006STACS,PUGLISI2008165,CROCHEMORE2008796}.
However,
this changed when
Bannai et al.~\cite{DBLP:conf/soda/BannaiIINTT15,Bannai:2017gv}
applied arguments
based on {\em Lyndon words}~\cite{Lyndon:1954it} to
solve, by a very simple proof,
a longstanding conjecture
that the maximum number of runs in a string of length $n$ is at most $n$.
Using the same technique,
the upper bound on the number of runs was further improved to $0.957n$
for binary strings~\cite{DBLP:conf/spire/0001HIL15}.
Bannai et al. also showed a new algorithm for computing all runs in a string,
which paved the way for algorithms with improved time complexity for general ordered alphabets
to be developed~\cite{KOSOLOBOV2016241,DBLP:conf/cpm/GawrychowskiKRW16,10.1007/978-3-319-46049-9_3}.

In this paper, we take the first steps of investigating to what extent Lyndon words can be
applied in the analysis of squares.
We first give an alternate proof of the Three Squares Lemma by arguments based
on Lyndon words, and extend it to show a more general variant which considers
three overlapping squares which do not necessarily share a common prefix.
Furthermore, we show a significantly
improved upper bound of $n\log_2 n$ on the maximum number of
occurrences of primitively rooted squares.

\section{Preliminaries}
Let $\Sigma$ be an alphabet.
An element of $\Sigma$ is called a symbol.
An element of $\Sigma^{\ast}$ is called a string.
The length of a string $w$ is denoted by $|w|$.
The empty string $\varepsilon$ is the string of length 0.
For any possibly empty strings $x,y,z$, if $w = xyz$, then
$x,y, z$ are respectively called a prefix, substring, suffix of $w$.
They are a {\em proper} prefix, substring, or suffix if they are not equal to $w$.
For any $1 \le i \le j \le |w|$, $w[i.. j]$ denotes the substring of $w$
starting at position $i$ and ending at position $j$.
We assume that $w[0],w[|w|+1]\neq w[i]$ for any $1 \leq i \leq |w|$.
For any string $x$, let $x^1 = x$, and for any integer $k \geq 2$, let $x^k = x^{k-1}x$.
If there exists no string $x$ and integer $k \geq 2$ such that $w = x^k$,
$w$ is said to be {\em primitive}.

A non-empty string $w$ is said to be a {\em Lyndon word}~\cite{Lyndon:1954it}
if $w$ is lexicographically smaller than any of its non-empty proper suffixes.
An important property of Lyndon words is that they cannot have a {\em border},
i.e., a non-empty substring that is both a proper suffix and prefix.
Also, notice that whether a string is a Lyndon word or not depends on
the choice of the lexicographic order.
Unless otherwise stated, our results hold for any lexicographic order.
However, we will sometimes require a pair of lexicographic orders $<_0$ and $<_{1}$,
the former induced by an arbitrary total order on $\Sigma$,
and the other induced by the opposite total order, i.e.,
for any $a,b \in \Sigma$, $a <_0 b$ if and only if $b <_1 a$.

An integer $1 \leq p \leq |w|$ is a {\em period} of string $w$ if
$w[i] = w[i+p]$ for all $i = 1,\ldots,|w|-p$.
A string is a {\em repetition}
if its smallest period $p$ is at most half of its length.
An occurrence $w[i..j]=v$ of a repetition $v$ with smallest period $p$
is a {\em maximal repetition} (or a {\em run}) in $w$,
if the smallest periods of both $w[i-1..j]$ and $w[i..j+1]$ are not $p$.

For any repetition $v$, an {\em L-root}~\cite{DBLP:journals/tcs/CrochemoreIKRRW14} $\lambda_v$
is a substring of $v$ that is a Lyndon word whose
length is equal to the smallest period of $v$.
It is easy to see that an L-root of a repetition always exists
and is unique.
We also define the {\em L-root interval} $r_v$
in $v$ as
the substring corresponding to the maximal integer power
in $v$ of $\lambda_v$.
Any repetition $v$ can be written as $v = xr_v y$ where
$x$ (resp. $y$) is a possibly empty proper suffix (resp. prefix) of
the L-root $\lambda_v$.
Notice that for any square $u^2$, $|r_{u^2}| \geq |u|$.
Also, for any square $u^2$,
it can be shown that
the smallest period $p_u$ of $u^2$ is a divisor of $|u|$
and is equal to $|\lambda_{u^2}|$, which implies that
it is also the smallest period of $r_{u^2}$ and
a divisor of $|r_{u^2}|$.

The next lemma shows that a Lyndon word can only occur
in a run as a substring of the L-root of the run.
\begin{lemma}
  \label{lem:substringOfLyndonPower}
  For any Lyndon word $v$, there is no Lyndon word
  $w = xyz$ for strings $x,y,z$ such that
  $x$ (resp. $z$) is a non-empty suffix (resp. prefix) of $v$.
\end{lemma}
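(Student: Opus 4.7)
The plan is to derive a contradiction by assuming such a Lyndon word $w=xyz$ exists and obtaining the two incompatible inequalities $w<v$ and $v<w$, one from each of the hypotheses on $z$ and on $x$.

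First I would exploit the prefix condition on $z$ to show $w<v$. Since $x$ is non-empty, $z$ is a proper non-empty suffix of $w$, so the Lyndon property of $w$ gives $w<z$. Because $|w|>|z|$, this strict inequality cannot come from $w$ being a proper prefix of $z$; it must come from a first differing position $i\le|z|$ with $w[i]<z[i]$. Since $z$ is a prefix of $v$, the strings $z$ and $v$ agree on positions $1,\dots,i$, so the inequality lifts to $w<v$.

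Next I would exploit the suffix condition on $x$ to show the reverse $v<w$. If $x=v$, then $w$ begins with $v$ and $|w|>|v|$ (because $z$ is non-empty), so $v<w$ is immediate. Otherwise $x$ is a proper non-empty suffix of $v$, and the Lyndon property of $v$ gives $v<x$. To propagate this into $v<w$ I need the first position $i'$ at which $v$ and $x$ differ to satisfy $i'\le|x|$, so that $x[i']=w[i']$. This is exactly where the no-border property of Lyndon words, recorded in the preliminaries, enters: if the first $|x|$ characters of $v$ and $x$ agreed, then $x$ would be simultaneously a non-empty proper prefix and a non-empty proper suffix of $v$, a border of $v$, contradicting $v$ being Lyndon. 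Hence $i'\le|x|$, giving $v<w$ and contradicting the previous step.

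The main obstacle I expect is precisely this propagation step in the second half: ensuring that the strict inequality $v<x$ is witnessed at a position $\le|x|$ so that it can be transferred from the suffix $x$ of $v$ to the prefix $x$ of $w$. The no-border property is the one nontrivial ingredient; everything else is just careful bookkeeping of where the first differing position can fall.
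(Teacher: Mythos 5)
Your proof is correct and follows essentially the same route as the paper's: it combines the Lyndon property of $w$ applied to the suffix $z$ with the Lyndon property of $v$ applied to the suffix $x$, plus the prefix relations, to get $w<v$ and $v<w$. The paper compresses this into the single chain $v \leq x < xyz = w < z \leq v$; your first-differing-position and no-border bookkeeping is subsumed by transitivity of the lexicographic order together with the fact that a proper prefix is strictly smaller than the whole string.
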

\begin{proof}
  If such $w$ exists,
  $v \leq x < xyz = w < z \leq v$, a contradiction.
\end{proof}

\section{Squares and L-roots}
We first prove a lemma concerning two squares.
\begin{lemma}\label{lem:doublesquareroots}
  Let $u^2$ and $v^2$ be squares where $v^2$ is a proper prefix of $u^2$.
  Then, the L-root interval $r_{u^2}$ of $u^2$ is not a substring of $v^2$,
  and either $r_{v^2}$ is a prefix of $r_{u^2}$, or
$r_{v^2}$ ends before $r_{u^2}$ starts.
\end{lemma}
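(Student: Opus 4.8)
The plan is to place both squares as prefixes of one string $S$, so that they share position $1$, and to track the two L-root intervals by their endpoints. Write $p=|\lambda_{u^2}|$ and $q=|\lambda_{v^2}|$ for the two smallest periods, and let $r_{u^2}=S[s_u .. e_u]$ and $r_{v^2}=S[s_v .. e_v]$, where $s_u\in\{1,\dots,p\}$ and $s_v\in\{1,\dots,q\}$ are the offsets at which the first full copies of $\lambda_{u^2}$ and $\lambda_{v^2}$ occur. I would first collect the elementary facts I intend to use: $S[1..2|u|]$ has global period $p$ and $S[1..2|v|]$ has global period $q$; $p\mid|u|$ and $q\mid|v|$, so in particular $q\le|v|<|u|\le|r_{u^2}|$; and the maximality of $r_{u^2}$ inside $u^2$ means there is no room for a further copy of $\lambda_{u^2}$ after $e_u$ within $u^2$, while the tail $S[e_v+1 .. 2|v|]$ of $v^2$ after $r_{v^2}$ is a proper prefix of $\lambda_{v^2}$ and hence shorter than $q$.

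The heart of the argument is a dichotomy for how $r_{v^2}$ can meet $r_{u^2}$, and this is where Lemma~\ref{lem:substringOfLyndonPower} does the work. I would show that the two intervals cannot overlap in a staggered fashion. Consider the first copy of $\lambda_{u^2}$, occupying $S[s_u .. s_u+p-1]$. If it lay inside the $\lambda_{v^2}$-periodic region of $v^2$ but misaligned with the period boundaries, then $\lambda_{u^2}$ would begin with a nonempty proper suffix of $\lambda_{v^2}$ and end with a nonempty proper prefix of it, i.e.\ $\lambda_{u^2}=xyz$ with $x$ a suffix and $z$ a prefix of $\lambda_{v^2}$, which Lemma~\ref{lem:substringOfLyndonPower} forbids; the symmetric argument rules out the mirror configuration. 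Consequently any overlap must be period-aligned, and since a Lyndon word is primitive and therefore has no period shorter than its own length, an aligned overlap forces the two roots to coincide, $\lambda_{u^2}=\lambda_{v^2}$ and $p=q$, with $s_u=s_v$. This is exactly the first alternative: starting at the common position $s_u=s_v$ and tiling by the same Lyndon word, $r_{v^2}$ is a prefix of $r_{u^2}$. If on the other hand the intervals do not overlap, then because $v^2$ is a prefix of $u^2$ the interval $r_{v^2}$ lies to the left, giving $e_v<s_u$, the second alternative.

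With the dichotomy in hand, the statement that $r_{u^2}$ is not a substring of $v^2$ follows as a corollary, and the two parts interlock. Suppose $r_{u^2}$ were a substring of $v^2$, i.e.\ $e_u\le 2|v|$. In the aligned case a direct length count, using $p\mid|u|$, $q\mid|v|$ and $|u|>|v|$, shows that $r_{u^2}$ must reach beyond position $2|v|$, so $e_u>2|v|$, a contradiction. In the disjoint case $r_{u^2}$ would have to fit into the tail $S[e_v+1 .. 2|v|]$, whose length is smaller than $q\le|v|<|u|\le|r_{u^2}|$, again impossible. Hence $r_{u^2}$ is never a substring of $v^2$; note that this also excludes $s_u=s_v$ with $e_v\ge e_u$ (which would place $r_{u^2}$ inside $v^2$), so that in the first alternative $r_{v^2}$ is in fact a proper prefix of $r_{u^2}$.

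The main obstacle is the alignment step in the second paragraph: making airtight the claim that every overlap is period-aligned and hence forces equal roots. The delicate points are the boundary configurations in which a whole copy of one root sits exactly at a boundary of the other, so that the suffix $x$ or prefix $z$ in Lemma~\ref{lem:substringOfLyndonPower} degenerates to the entire word, and the short-overlap cases where the two periodic regions meet only near their ends, which is precisely the regime the periodicity argument alone does not settle and where Lemma~\ref{lem:substringOfLyndonPower} must be invoked. The degenerate situations in which an L-root interval is a single Lyndon word should also be checked separately. Once misaligned overlaps are excluded, the remaining bookkeeping with periods and lengths is routine.
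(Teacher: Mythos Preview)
Your plan has a real error in the ``alignment step'': the claim that an overlap of $r_{v^2}$ and $r_{u^2}$ forces $\lambda_{u^2}=\lambda_{v^2}$ and $p=q$ is false. Take $u^2=\mathtt{aabaab}$ and $v^2=\mathtt{aa}$ with $a<b$: here $\lambda_{u^2}=\mathtt{aab}$, $\lambda_{v^2}=\mathtt{a}$, both L-root intervals start at position~$1$, so $r_{v^2}=S[1..2]$ is a proper prefix of $r_{u^2}=S[1..6]$, yet $p=3\neq 1=q$. Your mechanism ``place the first copy of $\lambda_{u^2}$ inside $r_{v^2}$ and invoke Lemma~\ref{lem:substringOfLyndonPower}'' simply does not fire here because that copy spills past $r_{v^2}$; and your downstream length count for ``$r_{u^2}$ not a substring of $v^2$'' in the aligned case uses $p=q$, so it inherits the error. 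The dichotomy you should be aiming for is only $s_u=s_v$ (prefix) versus $e_v<s_u$ (disjoint), with no control on $p$ versus $q$ in the first alternative.

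The paper's proof is organized the other way around and avoids this trap. It first shows $r_{u^2}$ is not a substring of $v^2$ directly, by writing $v^2$ simultaneously as $x\,r_{u^2}\,y$ and as $w\,r_{v^2}\,z$ (with $x,w$ proper suffixes and $y,z$ proper prefixes of the respective L-roots) and using Lemma~\ref{lem:substringOfLyndonPower} to force $|x|=|w|$ and $|y|=|z|$, hence $r_{u^2}=r_{v^2}$, which then contradicts maximality. Only after that does it treat the two genuine overlap configurations. When $s_u<s_v$, the argument is a direct lexicographic comparison on $r_{u^2}$ (not Lemma~\ref{lem:substringOfLyndonPower}): writing $r_{u^2}=x\,r_{v^2}\,y$ with $x$ a nonempty proper suffix of $\lambda_{v^2}$, one compares $x$ with the length-$|x|$ prefix $x'$ of $r_{v^2}$ and derives $r_{u^2}>x'zy$, forcing $p\mid|x|$ and hence $x=x'$. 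When $s_v<s_u$, the key extra idea is that the global period $p$ of $u^2$ makes the pre-overlap piece of $r_{v^2}$ reappear as a \emph{suffix} of $r_{u^2}$, so that Lemma~\ref{lem:substringOfLyndonPower} forces that piece to be a power of $\lambda_{v^2}$ and puts $\lambda_{v^2}$ at both ends of $r_{u^2}$. Neither of these two ideas is present in your sketch, and they are exactly what is needed in the ``short-overlap'' regime you yourself flag as the main obstacle.
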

\begin{proof}
  Let $p_u$ and $p_v$ respectively be the smallest periods of $u^2$ and $v^2$.
If $r_{u^2}$ is a substring of $v^2$,
  then,
  $v^2 = x r_{u^2} y = w r_{v^2} z$
  for
  some suffix $x$ of $\lambda_{u^2}$,
  some prefix $y$ of $\lambda_{u^2}$,
  some suffix $w$ of $\lambda_{v^2}$,
  and some prefix $z$ of $\lambda_{v^2}$.
  If $p_u \neq p_v$, then $r_{u^2} \neq r_{v^2}$ must hold
  since $p_u$ and $p_v$ are respectively their smallest periods.
  This implies either $|x|\neq|w|$ or $|y|\neq|z|$.
  However, that would contradict
  Lemma~\ref{lem:substringOfLyndonPower}.
  If $p_u = p_v$, then it must be that $r_{u^2}=r_{v^2}$
  due to their maximality.
  Since $u$ is longer than $v$, and
  $p_u=p_v$ must also be a divisor of their lengths,
  $u^2$ must be at least $2p_u$ longer than $v^2$.
  However that would contradict the maximality of $r_{u^2}$,
  since at least one more copy of $\lambda_{u^2}$ would fit inside $u^2$.

  Next, suppose that $r_{v^2}$ overlaps with $r_{u^2}$,
  and is not a prefix of $r_{u^2}$.
  Since $r_{u^2}$ cannot be a substring of $v^2$ in which $r_{v^2}$ is a substring,
  $r_{u^2}$ starts in $v^2$, and ends after the end of $v^2$.
  There are two cases:
  (1)
$r_{v^2}$ starts after the beginning of $r_{u^2}$ and ends in $r_{u^2}$ (Fig.~\ref{fig:case1})
or (2) $r_{v^2}$ starts before $r_{u^2}$, and ends in $r_{u^2}$ (Fig.~\ref{fig:case2}).

  \begin{figure}[h]
    \begin{minipage}[c]{0.5\textwidth}
      \centerline{\includegraphics[width=0.95\textwidth]{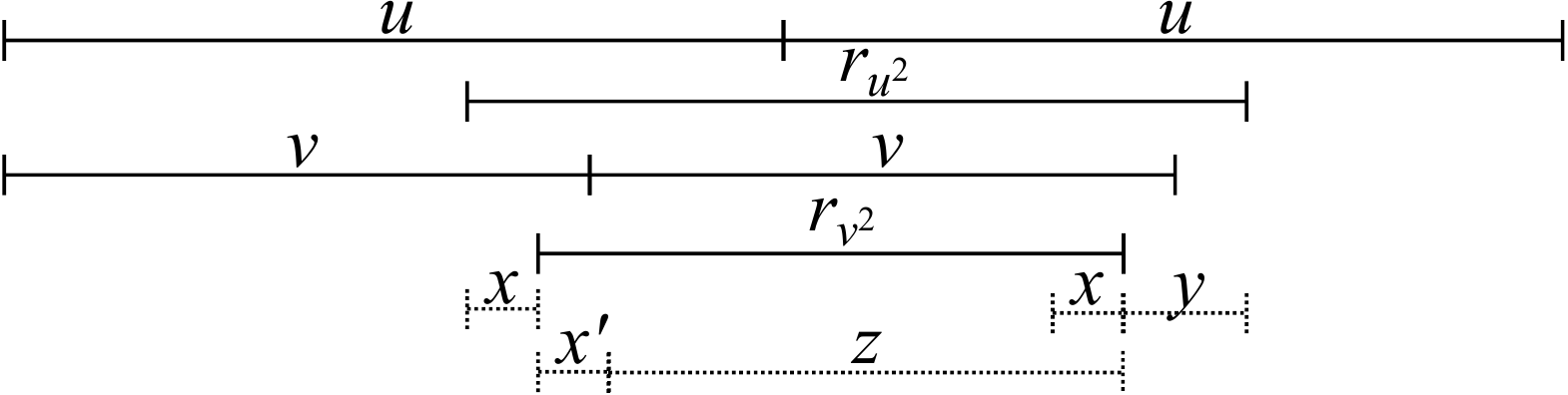}}
      \caption{Case (1) of Lemma~\ref{lem:doublesquareroots}.}
      \label{fig:case1}
    \end{minipage}
    \begin{minipage}[c]{0.5\textwidth}
      \centerline{\includegraphics[width=0.95\textwidth]{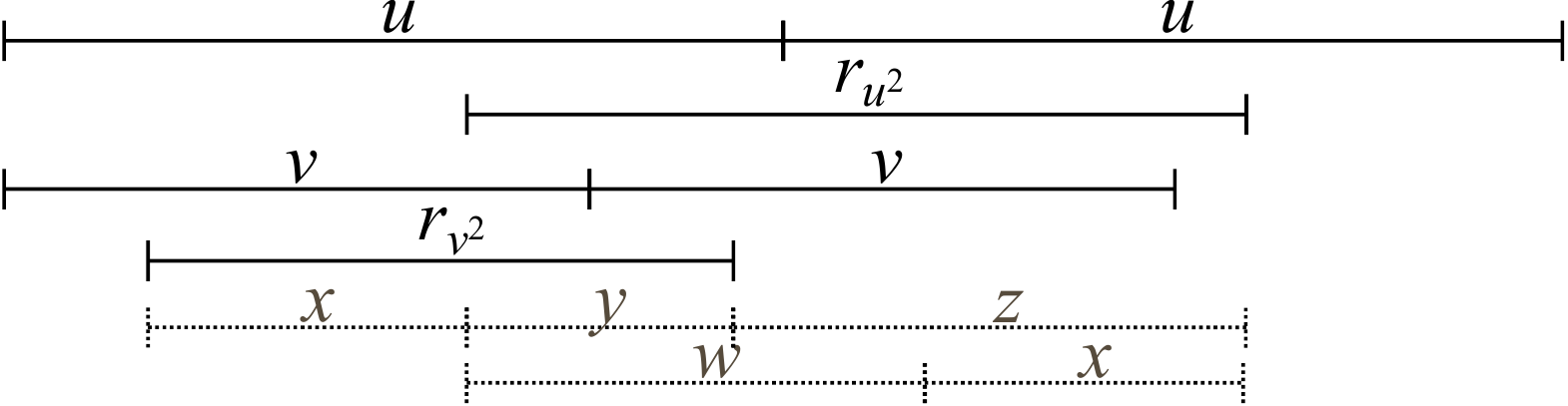}}
      \caption{Case (2) of Lemma~\ref{lem:doublesquareroots}.}
      \label{fig:case2}
    \end{minipage}
  \end{figure}
  Case (1) implies that $r_{u^2} = x r_{v^2} y$ for some
  non-empty proper suffix $x$ of $\lambda_{v^2}$
  and some suffix $y$ of $r_{u^2}$.
Let $r_{v^2} = x'z$ where $|x'|=|x|$.
  Since $|x'| < p_v$,
  we have $x > r_{v^2} > x'$,
  and thus, $r_{u^2} = x r_{v^2} y > x'zy$.
  This can hold only if
  $|x|$ is a multiple of $p_u$,
  but this also implies $x=x'$ which is a contradiction.

  Case (2) implies that a suffix of $r_{v^2}$ overlaps with a prefix of $r_{u^2}$.
  Let $r_{v^2} = xy$, $r_{u^2} = yz$ where $y$ is the overlap,
  and observe that $|x| < p_u$ due to the maximality of $r_{u^2}$.
  Notice that since $u^2$ has period $p_u$ which is a divisor of $|r_{u^2}|$,
  $x$ must also be a suffix of $r_{u^2}$,
  so we can write $r_{u^2} = wx$ for some $w$.
From Lemma~\ref{lem:substringOfLyndonPower},
$x$ must be an integer power of $\lambda_{v^2}$, since otherwise,
  there would be an occurrence of $\lambda_{v^2}$
  crossing the boundary of $x$ and $y$.
  Thus, $r_{u^2}$ contains the Lyndon word $\lambda_{v^2}$ of length $p_v$ as a prefix and suffix,
  which can only hold if $p_u = p_v$.
  However, this contradicts the maximality of $r_{u^2}$.
\end{proof}

To prove Lemma~\ref{lem:tslorig}, we use the previous lemma,
together with the following lemma used in the proof of the
``runs'' theorem~\cite{Bannai:2017gv}
which connects L-roots of runs
and longest Lyndon words starting at each position.
\begin{lemma}[Lemma 3.3 of~\cite{Bannai:2017gv}]
  \label{lem:longestLyndonWordAsLRoot}
  For any run $w[i..j]$ with period $p$,
  consider the lexicographic order
  $< \in \{ <_0, <_1 \}$ such that $w[j+1] < w[j+1-p]$.
  Then,
  any occurrence of the L-root of the run $w[i..j]$
  is the longest Lyndon word starting at that position.
\end{lemma}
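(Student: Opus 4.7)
The plan is to argue by contradiction, using the tailored lex order as the key lever. Let $k$ index an occurrence of the L-root, so that $w[k..k+p-1] = \lambda$ is a Lyndon word under the chosen order by definition of the L-root. Suppose some strictly longer Lyndon word $\mu = w[k..k+\ell-1]$ with $\ell > p$ also starts at position $k$; I would split into two cases depending on whether $\mu$ fits inside the run.

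The easy case is $k+\ell-1 \le j$, where $\mu$ lies entirely in $w[i..j]$ and therefore inherits the run's period $p < \ell$. However, a Lyndon word cannot admit any period strictly smaller than its own length, since that would immediately produce a nonempty proper border, contradicting the border-free property of Lyndon words noted in the preliminaries.

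The substantive case is $k+\ell-1 > j$, where $\mu$ crosses the right boundary of the run. I plan to compare $\mu$ character by character with its proper suffix $\nu = w[k+p..k+\ell-1]$, of nonempty length $\ell-p$. Lyndon-ness of $\mu$ demands $\mu < \nu$, and I will refute this. For every offset $t$ with $k+p+t \le j$, periodicity inside the run gives $\mu[t] = w[k+t] = w[k+p+t] = \nu[t]$, so $\mu$ and $\nu$ agree up to offset $t^\ast = j-k-p+1$. At $t^\ast$ we have $\mu[t^\ast] = w[j+1-p]$, which is a valid position inside the run because a run's smallest period is at most half its length, and $\nu[t^\ast] = w[j+1]$. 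By the chosen order, $w[j+1] < w[j+1-p]$, so $\nu[t^\ast] < \mu[t^\ast]$, yielding $\nu < \mu$ and the desired contradiction.

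The main obstacle, and really the only insight required, is recognizing that the lex order has been engineered precisely so that the decisive character comparison between $\mu$ and its suffix $\nu$ lands on the pair $(w[j+1-p], w[j+1])$ that witnesses the right end of the run. Everything else reduces to index bookkeeping together with the elementary Lyndon facts—primitivity, border-freeness, and the standard prefix-versus-suffix comparison—already in play in the paper.
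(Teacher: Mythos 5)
Your proof is correct: the case split on whether the longer Lyndon word stays inside the run (forcing a border via period $p$) or crosses its right end (where the comparison with the suffix at offset $p$ is decided exactly by the pair $w[j+1-p]$, $w[j+1]$ that the chosen order controls) is exactly the standard argument, and the index bookkeeping checks out. Note that the paper itself gives no proof of this lemma---it imports it as Lemma 3.3 of the cited ``runs'' theorem paper---and your argument matches the proof given there in essence.
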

It is easy to see that for any repetition,
there is a unique run with the same smallest period and L-root
in which the repetition is contained.
For any occurrence of a repetition in a string,
we will refer to the lexicographic order considered
in Lemma~\ref{lem:longestLyndonWordAsLRoot}
as {\em the} lexicographic order of the repetition.

\begin{proof}[of Lemma~\ref{lem:tslorig}]
  Consider the lexicographic order of $w^2$,
  i.e., L-root $\lambda_{w^2}$ is
  a longest Lyndon word starting at the first position of $r_{w^2}$.
  From Lemma~\ref{lem:doublesquareroots},
  the starting positions $b_{w^2},b_{v^2},b_{u^2}$ respectively
  of $r_{w^2}$, $r_{v^2}$, $r_{u^2}$ are non-decreasing.
  There are four cases:
  (1) $b_{w^2} < b_{v^2} < b_{u^2}$,
  (2) $b_{w^2} < b_{v^2} = b_{u^2}$,
  (3) $b_{w^2} = b_{v^2} < b_{u^2}$,
  and
  (4) $b_{w^2} = b_{v^2} = b_{u^2}$,
  where inequality of the starting positions
  implies the disjointness of the L-root intervals.

  Case (1): It follows that $r_{w^2}$, $r_{v^2}$, $r_{u^2}$ occur disjointly in $u^2$.
  Therefore, $2|u| \geq |r_{w^2}| + |r_{v^2}| + |r_{u^2}|$. Since $|r_{w^2}|\geq|w|, |r_{v^2}|\geq |v|, |r_{u^2}| \geq |u|$, we have
  $|u| \geq |w|+|v|$.

  Case (2): It follows that $r_{w^2}$ occurs disjointly before
  $r_{u^2}$, and $r_{v^2}$ is a prefix of $r_{u^2}$.
  Since $r_{v^2}\geq|v|$,
  $r_{w^2}$ is a substring of $v$ and thus also of $u$.
  Due to $u^2$ and $v^2$, there are two other
  occurrences of $r_{w^2}$ respectively $|u|$ and $|v|$ positions to the right.
Since $w$ is primitive, the smallest period of $r_{w^2}$ is $|\lambda_{w^2}|=|w|$,
  and thus the two occurrences of $r_{w^2}$ must be at least $|w|$ apart.
Therefore, $|w| \leq |u| - |v|$, which implies $|u| \geq |v|+|w|$.

Case (3):
  By the assumption of the lexicographic order,
  $\lambda_{w^2}$ is the longest Lyndon word starting at $b_{w^2}$
  and thus $|\lambda_{w^2}| \geq |\lambda_{v^2}|$.
  Since $r_{w^2}$ is a prefix of $r_{v^2}$,
  it must hold that $\lambda_{w^2} = \lambda_{v^2}$
  due to Lemma~\ref{lem:substringOfLyndonPower}.
  Since
  $|v| > |w| = |\lambda_{w^2}| = |\lambda_{v^2}|$
  and $|v|$ is a multiple of $|\lambda_{v^2}|$,
  we have $|v| \geq 2|\lambda_{v^2}|$.
  This implies $|r_{v^2}| \geq |v|+|\lambda_{v^2}|$.
  Also, since $r_{u^2}$ occurs disjointly with $r_{v^2}$ in $u^2$,
  we have
  $2|u| \geq |r_{v^2}|+|r_{u^2}|$, which implies
  $|u| \geq |r_{v^2}|$ since $|r_{u^2}| \geq |u|$.
  Then, $|u|\geq |r_{v^2}| \geq |v|+|\lambda_{v^2}| = |v|+|w|$.

  Case (4):
  Analogously to the previous case,
  we have $\lambda_{w^2} = \lambda_{v^2} = \lambda_{u^2}$.
  This implies that
  $|u|,|v|$ are multiples of $|\lambda_{w^2}|$
  and since $|u|>|v|$, we have
  $|u| \geq |v|+|\lambda_{v^2}| = |v|+|w|$.
\end{proof}

We note that actually, the proof of Lemma~\ref{lem:doublesquareroots}
does not require $v^2$ to be a prefix of $u^2$,
but only that $v^2$ is a substring of $u^2$ that starts before $r_{u^2}$,
so slightly stronger statements hold.
\begin{corollary}
  Let $u^2$ and $v^2$ be squares such that $v^2$ is a proper substring of
  $u^2$ that starts before the L-root interval $r_{u^2}$ of $u^2$.
  Then, $r_{u^2}$ is not a substring of $v^2$,
  and either
  the L-root interval $r_{v^2}$ of $v^2$
  is a prefix of $r_{u^2}$, or
$r_{v^2}$ ends before $r_{u^2}$ starts.
\end{corollary}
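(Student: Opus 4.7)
The plan is to inspect the existing proof of Lemma~\ref{lem:doublesquareroots} and verify that it never actually uses the hypothesis that $v^2$ is a \emph{prefix} of $u^2$: all it really needs is that $v^2$ sits inside $u^2$ and begins before the L-root interval $r_{u^2}$. Once this observation is made, the corollary follows with no new combinatorial content, as the remark preceding the statement already suggests. I would therefore structure the proof as a careful re-reading of Lemma~\ref{lem:doublesquareroots}'s argument rather than a fresh derivation.

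Concretely, I would walk through the two parts of that proof. The first part shows that $r_{u^2}$ cannot lie inside $v^2$; its derivation only manipulates the decompositions $v^2 = x\,r_{u^2}\,y = w\,r_{v^2}\,z$ and invokes Lemma~\ref{lem:substringOfLyndonPower}, together with the maximality of $r_{u^2}$ inside $u^2$. No step refers to the left endpoint of $u^2$ or of $v^2$, so it transfers verbatim to the weaker hypothesis. The second part considers the scenario where $r_{v^2}$ overlaps $r_{u^2}$ without being a prefix of $r_{u^2}$. The original argument deduces that $r_{u^2}$ must then start inside $v^2$ and end past the right end of $v^2$, and then splits into the two configurations depicted in Figures~\ref{fig:case1} and~\ref{fig:case2}. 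Both configurations remain derivable from the weaker hypothesis in the same way: since $v^2$ begins before $r_{u^2}$ and, by the first part, does not contain $r_{u^2}$, the only way $r_{v^2}\subseteq v^2$ can overlap $r_{u^2}$ without being a prefix of it is that $r_{u^2}$ straddles the right boundary of $v^2$, putting us in case (1) or case (2) exactly as before.

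The last thing to check is that the case analysis is still exhaustive under the weaker hypothesis. The genuinely new possibility is that $r_{u^2}$ starts after $v^2$ ends entirely; but then $r_{v^2}\subseteq v^2$ trivially ends before $r_{u^2}$ starts, which is one of the two permitted conclusions. I do not anticipate any real obstacle: the only care required is to track positions and confirm that each step in the original proof is translation-invariant in the left endpoint of $v^2$, which it is. With these verifications the corollary is immediate.
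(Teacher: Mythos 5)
Your proposal is correct and matches the paper exactly: the paper gives no separate proof of this corollary, justifying it only by the remark immediately preceding it that the proof of Lemma~\ref{lem:doublesquareroots} never uses the prefix hypothesis, only that $v^2$ is a substring of $u^2$ starting before $r_{u^2}$. Your walk-through of why each step is translation-invariant in the left endpoint of $v^2$, and why the case analysis remains exhaustive, is precisely the verification the authors leave implicit.
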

\begin{corollary}
  Let $u^2$, $v^2$, and $w^2$ be squares such that
  $v^2$ is a proper substring of $u^2$ that starts before $r_{u^2}$,
  and
  $w^2$ is a proper substring of $v^2$ that starts before $r_{u^2}$ and $r_{v^2}$,
  where
  $r_{u^2}$, $r_{v^2}$ are respectively the L-root intervals of $u^2,v^2$
  with respect to the lexicographic order of $w$.
  If $w$ is primitive, then $|u|\geq|v|+|w|$.
\end{corollary}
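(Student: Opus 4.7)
The plan is to mirror the proof of Lemma~\ref{lem:tslorig} essentially verbatim, substituting the first Corollary above in place of Lemma~\ref{lem:doublesquareroots}. I would first apply that Corollary to each of the three pairs $(u^2, v^2)$, $(u^2, w^2)$, and $(v^2, w^2)$: the hypotheses hold for each pair by assumption, noting in particular that since $w^2$ is a substring of $v^2$ which itself starts before $r_{u^2}$, and $w^2$ starts before $r_{u^2}$ by hypothesis, $w^2$ is a proper substring of $u^2$ starting before $r_{u^2}$. This yields the non-decreasing chain $b_{w^2} \le b_{v^2} \le b_{u^2}$ of starting positions, and tells us that consecutive L-root intervals in this chain are either prefix-nested or occur disjointly.

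Next I would perform the same case analysis on which of the inequalities are strict, producing the same four cases as in the proof of Lemma~\ref{lem:tslorig}. In each case the arithmetic carries over unchanged, because the only use the original proof makes of its ``common prefix'' hypothesis is that all three squares and their L-root intervals lie inside $u^2$, which remains true here via the chain $w^2 \subset v^2 \subset u^2$. The lexicographic order of $w$ enters in Cases (3) and (4), where Lemma~\ref{lem:longestLyndonWordAsLRoot} guarantees that $\lambda_{w^2}$ is the longest Lyndon word starting at $b_{w^2}$, so comparing its length with $|\lambda_{v^2}|$ (and with $|\lambda_{u^2}|$) and invoking Lemma~\ref{lem:substringOfLyndonPower} forces these L-roots to coincide, after which the same divisibility argument closes the case.

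The only genuinely delicate step, I expect, is Case (2), where one must verify that $r_{w^2}$ lies inside both first halves $u$ and $v$ of $u^2$ and $v^2$, so that the $|v|$- and $|u|$-shifted occurrences of $r_{w^2}$ actually live inside $v^2$ and $u^2$. This reduces to a short positional calculation: combining $|r_{u^2}| \ge |u|$ with $r_{u^2} \subseteq u^2$ bounds $b_{u^2}$ above by ``start of $u^2$ plus $|u|$'', and similarly for $v^2$, placing $r_{w^2}$ inside the intended halves. Once this is in place, the primitivity of $w$ forces the two shifted occurrences of $r_{w^2}$ at positions $b_{w^2}+|v|$ and $b_{w^2}+|u|$ to be at least $|w|$ apart, yielding $|u| - |v| \ge |w|$.
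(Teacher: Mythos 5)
Your proposal is correct and is exactly the argument the paper intends: the corollary is stated without an explicit proof precisely because the proof of Lemma~\ref{lem:tslorig} goes through verbatim once the first corollary replaces Lemma~\ref{lem:doublesquareroots}, and you correctly identify the one place (Case~(2)) where the loss of the common-prefix hypothesis requires an extra check that $r_{w^2}$ sits inside the first periods of both $v^2$ and $u^2$, which your positional bound via $|r_{u^2}|\geq|u|$ and $|r_{v^2}|\geq|v|$ handles.
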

\section{Tighter upper bound for $\PS(n)$}
There can be $\Theta(n^2)$
occurrences of non-primitively rooted squares
in a string of length $n$
(e.g. a unary string).
However, as mentioned in the introduction,
Lemma~\ref{lem:tslorig} implies an upper bound of
$n\log_{\phi} n \simeq 1.441 n \log_2 n$
for $\PS(n)$, i.e.,
the maximum number of occurrences of primitively rooted squares in a string of length $n$.
On the other hand,
the best known lower bound is given by Fibonacci words,
which contain
$\frac{2(3-\phi)}{5\log_2\phi} F_n \log_2 F_n + O(F_n)$
occurrences of primitive squares~\cite{fraenkel1999exact},
where $F_n$ is the length of the $n$-th Fibonacci word,
$\phi$ is the golden ratio, and $\frac{2(3-\phi)}{5\log_2\phi} \approx 0.7962$.
Below, we prove a significantly improved upper bound for $\PS(n)$.
\begin{theorem} \label{thm:main}
  $\PS(n) \le n \log_2 n$.
\end{theorem}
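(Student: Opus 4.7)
The plan is to charge each primitively rooted square to a pair $(j,<)$ consisting of a position and a lex order, and to bound the charges per pair by $\log_2 n$ using a Lyndon-based doubling argument in the spirit of the proof of Lemma~\ref{lem:tslorig}.

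The key combinatorial lemma I would first establish is a \emph{Lyndon square doubling} property: if $u_1^2$ and $u_2^2$ are two primitively rooted squares at the same position $i$, with $u_1$ Lyndon under a lex order $<$ and $|u_1|<|u_2|$, then $|u_2|\ge 2|u_1|$. The proof mirrors the Lyndon arguments already used in this paper: $u_1$ being Lyndon forces $u_1$ to be unbordered, whereas if $|u_2|<2|u_1|$ we may write $u_2=u_1\alpha$ with $\alpha$ a non-empty proper prefix of $u_1=\alpha\beta$, and the overlap of $u_1^2$ with $u_2^2$ propagates the period to force $\beta$ to be a prefix of $u_1$ as well, exhibiting $\beta$ (or, if $\beta$ is empty, $\alpha$) as a non-empty border of $u_1$, contradicting its Lyndon property. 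Consequently, at each pair $(i,<)$, at most $\log_2 n$ primitively rooted squares $u^2$ with $u$ Lyndon under $<$ can occur.

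I would then associate each primitively rooted square to a Lyndon square via its L-root. For $u^2$ at position $i$ in a run $r$ with lex order $<_r$ and L-root $\lambda_{u^2}$ of length $|u|$, the L-root appears at some position $j\in[i,i+|u|-1]$ inside $u^2$, and $\lambda_{u^2}^2$ itself is a primitively rooted square at $j$ whose root is Lyndon under $<_r$ (by Lemma~\ref{lem:longestLyndonWordAsLRoot}). The charging scheme would send $u^2$ to a pair $(\pi(u^2),<_r)$ with $\pi(u^2)$ chosen so that (i) every primitively rooted square is charged once, and (ii) distinct squares charged to the same pair correspond to primitively rooted Lyndon squares at a common position with the same lex order, so that the Lyndon square doubling lemma applies and bounds the charges per pair by $\log_2 n$. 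Summing $\log_2 n$ over the $n$ positions—noting that each square's $<_r$ is uniquely determined by its run—would yield $\PS(n)\le n\log_2 n$.

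The main obstacle is that many primitively rooted squares inside a single run share the same L-root at the same position $j$, differing only by their starting position $i\in[j-|u|+1,j]$; a naive charge to the L-root's position allocates up to $|u|$ squares to the same pair. To overcome this, I would refine the choice of $\pi(u^2)$ so as to encode $i$ itself (for instance, charging $u^2$ at $i$ to the pair $(i+|u|-1,<_r)$, corresponding to the last position of the first $u$), and then invoke the corollary to Lemma~\ref{lem:doublesquareroots}, which controls the interplay between overlapping squares (not necessarily sharing a prefix) and their L-root intervals, to argue that distinct squares charged to the same refined pair still correspond to distinct Lyndon-rooted squares whose sizes satisfy the doubling property. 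Working out this refinement so that every primitively rooted square is accounted for exactly once, with no factor of two lost in the final count, is the crux of the argument.
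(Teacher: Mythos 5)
Your ``Lyndon square doubling'' lemma is correct (it only needs the smaller root to be unbordered: if $|u_1|<|u_2|<2|u_1|$ and both squares start at $i$, the overlap forces a non-empty border of $u_1$), and it does bound by $\log_2 n$ the number of primitively rooted squares starting at a fixed position whose root is Lyndon. The genuine gap is the reduction from arbitrary primitively rooted squares to such Lyndon-rooted squares, which you yourself leave as ``the crux.'' Two concrete problems. First, the intermediate claim that $\lambda_{u^2}^2$ is itself a square occurrence at the position $j$ of the L-root is false in general: in $w[4..13]=abcab\,abcab$ the L-root $ababc$ occurs at position $7$, but $w[7..16]$ is not a square, since the run ends at $13$. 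Second, and more seriously, the proposed refinement $\pi(u^2)=(i+|u|-1,<_r)$ groups squares by the endpoint of their first half, i.e.\ by their center, and same-center primitively rooted squares do \emph{not} satisfy doubling: in $w=(abcabcab)^2$ the squares $(abcab)^2=w[4..13]$ and $(abcabcab)^2=w[1..16]$ are both primitively rooted and both have their first halves ending at position $8$, yet $5<8<10$. So the doubling lemma cannot be invoked for the refined pairs, and no argument is given that would recover a $\log_2 n$ bound per pair (let alone without losing a factor of $2$ over the two lexicographic orders).

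The paper avoids per-square charging altogether. It counts the $\ell_\rho-2p_\rho+1\le|x_\rho|$ primitively rooted squares of each run $\rho=x_\rho\lambda_\rho y_\rho$ in bulk, splits $|x_\rho|$ among the $k$ occurrences of the L-root in $\rho$ (at most $\min\{|z|,|\lambda_\rho|\}$ per occurrence, where $z\lambda_\rho$ is the standard factorization at the parent of that occurrence's right node), identifies these occurrences with right nodes of the two Lyndon trees via Lemmas~\ref{lem:longestLyndonWordAsLRoot} and~\ref{lem:longestIsRightnode}, and then bounds the total charge per tree by the recurrence $\mathcal{S}(n)\le\max\{\mathcal{S}(n_1)+\mathcal{S}(n_2)+\min\{n_1,n_2\}\}\le\frac{n}{2}\log_2 n$ (Lemma~\ref{lem:Cnlqnlogn}). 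If you want to salvage your approach, you would need a charging scheme with this kind of global, tree-based amortization rather than a per-position doubling argument.
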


Each primitively rooted square of $w$ is a substring of a run of $w$.
Let $\Runs(w)$ denote the set of runs in $w$.
Conversely, each run $\rho \in \Runs(w)$ with length $\ell_{\rho}$ and period $p_{\rho}$ contains exactly
$\ell_{\rho}-2p_{\rho}+1$ primitively rooted squares as substrings.
Let $\lambda_{\rho}$ be an L-root of a run $\rho$ with respect to
the lexicographical order of $\rho$.
If we consider the rightmost occurrence of $\lambda_{\rho}$ in $\rho$,
there exist strings $x_{\rho}, y_{\rho}$ such that
$\rho = x_{\rho} \lambda_{\rho} y_{\rho}$ and $y_{\rho}$ is a possibly empty proper prefix of $\lambda_{\rho}$.
Since $|\lambda_{\rho}| \geq |y_{\rho}|+1$,
the number of primitively rooted squares in
$\rho$ is $\ell_{\rho} - 2p_{\rho}+1 = |x_{\rho}|+|\lambda_{\rho}|+|y_{\rho}| - 2|\lambda_{\rho}|+1\leq |x_{\rho}|$.
Thus, the total sum of $|x_{\rho}|$ for all runs in $w$ gives an upper bound on
the number of occurrences of primitively rooted squares in $w$.
We will show that this total sum
is bounded by $n\log_2 n$ for any string $w$ of length $n$, which will yield Theorem~\ref{thm:main}.

To this end, we use the notion of Lyndon trees~\cite{BARCELO199093,Bannai:2017gv}.
The {\em Lyndon tree} of a Lyndon word $w$ is an ordered full binary tree defined recursively as follows\footnote{If $w$ is not a Lyndon word, we simply consider
the Lyndon word obtained by prepending to $w$ a symbol smaller than any symbol in $w$.}:
If $|w|=1$, then the Lyndon tree of $w$ is a single node labeled $w$, and if $|w|\geq 2$, then the root is labeled $w$, and the left and right children of $w$ are respectively the Lyndon trees of $u$ and $v$,
where $w = uv$ and $v$ is the lexicographically smallest proper suffix of $w$.
Note that this is known as the standard factorization of $w$~\cite{10.2307_1970044,lothaireCombinatoricsOnWords}, and $u,v$ are guaranteed to be Lyndon words.

From Lemma~\ref{lem:longestLyndonWordAsLRoot} and Lemma~\ref{lem:longestIsRightnode} below,
we have that for any string $w$,
the right nodes of the two Lyndon trees of $w$ with respect to $<_0$ and $<_1$
contain all L-roots of all runs in $w$.
\begin{lemma}[Lemma 5.4 of~\cite{Bannai:2017gv}]\label{lem:longestIsRightnode}
  Let $w$ be a Lyndon word. For any interval $[i..j]$ except for
  $[1..|w|]$, $[i..j]$ corresponds to a right node of the Lyndon tree
  if and only if $w[i..j]$ is the longest Lyndon word that starts at $i$.
\end{lemma}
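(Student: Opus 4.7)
The plan is to prove the iff by induction on $|w|$, with the forward direction (right node implies longest Lyndon word) doing the heavy lifting and the converse following from uniqueness of the longest Lyndon word at any fixed starting position. The base case $|w|=1$ is vacuous. For the inductive step, take the standard factorization $w=uv$ (so $v$ is the longest Lyndon proper suffix of $w$) and split a right node $[i..j]$ of the Lyndon tree of $w$ into three cases: (A) $[i..j]=[|u|+1..|w|]$ is the root's right child; (B) $[i..j]$ lies strictly inside the right subtree (rooted at $v$); or (C) $[i..j]$ lies in the left subtree (rooted at $u$). Cases (A) and (B) reduce directly to the induction hypothesis, since for $i\ge|u|+1$ the longest Lyndon word starting at $i$ in $w$ coincides with the longest starting at $i-|u|$ in $v$, and in (A) the right child $v$ already reaches the end of $w$.

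Case (C) is the substantial case. By induction applied to $u$, $w[i..j]$ is the longest Lyndon word starting at $i$ in $u$, and the Lyndon tree structure forces $i\ge 2$ (no right node can share its left endpoint with its containing tree's root). Suppose for contradiction that $w[i..k]$ is a Lyndon word with $k>j$; then necessarily $k>|u|$, so $w[i..k]=u^*v^*$ where $u^*=u[i..|u|]$ is a non-empty proper suffix of $u$ and $v^*=v[1..k-|u|]$ is a non-empty prefix of $v$. The key sub-claim is that $u^*v$ is itself a Lyndon word: granted this, $u^*v$ is a Lyndon proper suffix of $w=uv$ strictly longer than $v$, contradicting the maximality of $v$ built into the standard factorization.

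To establish the sub-claim, compare $u^*v$ against each of its non-empty proper suffixes, classified as (i) $u^{**}v$ for $u^{**}$ a non-empty proper suffix of $u^*$, (ii) $v$ itself, and (iii) non-empty proper suffixes of $v$. For (i) and (ii), the Lyndon property of $u^*v^*$ yields $u^*v^*<u^{**}v^*$ and $u^*v^*<v^*$, and the first differing position in each comparison must lie within the first $|u^{**}v^*|$ (resp.\ $|v^*|$) characters; since $v^*$ is a prefix of $v$, the strings being compared coincide with those in the $v$-version over this range, so the decisive position and outcome carry over, giving $u^*v<u^{**}v$ and $u^*v<v$. Case (iii) follows by transitivity, combining $u^*v<v$ with $v<v[q..|v|]$ for each $q\ge 2$ (Lyndon property of $v$). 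The converse direction of the iff is then immediate: for each $i\ge 2$ there is a unique right node of the Lyndon tree with left endpoint $i$, whose interval is identified with the longest Lyndon word starting at $i$ via the forward direction; the case $i=1$ is vacuous since the longest Lyndon word starting at position $1$ is $w$ itself, the excluded interval. The main obstacle is the sub-claim: one must carefully track that in each suffix comparison the decisive position lies in the initial region where $v^*$ and $v$ agree, so that the Lyndon property of $u^*v^*$ transfers to $u^*v$ without requiring any information about $v$ beyond the known prefix $v^*$.
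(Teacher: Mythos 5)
There is no in-paper proof to compare your argument against: the paper imports this statement verbatim as Lemma~5.4 of~\cite{Bannai:2017gv} and uses it as a black box, so your proposal is filling a gap the paper deliberately leaves to a citation. Judged on its own merits, your induction is correct and self-contained. The case split (A)/(B)/(C) accounts for exactly the right nodes of the tree of $w$ (the root's right child, the right nodes inside the subtree of $v$, and those inside the subtree of $u$), cases (A) and (B) reduce properly to the induction hypothesis because every substring of $w$ starting at a position $\geq |u|+1$ lies inside $v$, and the heart of the matter --- case (C) --- is sound. In particular, the sub-claim is proven correctly: in the comparisons $u^*v^* < u^{**}v^*$ and $u^*v^* < v^*$ the shorter string cannot be a prefix of the longer one (a proper prefix is lexicographically smaller, contradicting the Lyndon inequalities), so each comparison is decided at a position within the region where $v^*$ and $v$ agree, and the inequalities transfer to $u^*v < u^{**}v$ and $u^*v < v$; suffixes of $v$ are handled by transitivity. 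The Lyndon proper suffix $u^*v$ of $w$, longer than $v$, then contradicts the standard factorization.

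Two steps you assert do need a line of justification. First, the paper defines the standard factorization by $v$ being the \emph{lexicographically smallest} proper suffix of $w$, whereas you invoke maximality of $|v|$ among Lyndon proper suffixes; these are equivalent (a longer Lyndon proper suffix would contain $v$ as a proper suffix and hence be smaller than $v$), but you can also bypass the issue: since $v$ is a proper suffix of the Lyndon word $u^*v$, you get $u^*v < v$, directly contradicting lex-minimality of $v$. Second, your converse direction relies on the existence, for every $i \geq 2$, of a right node with left endpoint $i$; this is true but not automatic. Either argue that on the root-to-leaf path to position $i$ the highest node whose interval starts at $i$ must be a right child (its parent starts strictly earlier, while left children start where their parents do), or note that your own induction shows the left endpoints of right nodes cover $\{2,\dots,|w|\}$: they are $|u|+1$, the endpoints $\{2,\dots,|u|\}$ from the subtree of $u$, and $\{|u|+2,\dots,|w|\}$ from the subtree of $v$. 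With these two patches the argument is complete; for context, the original proof in~\cite{Bannai:2017gv} takes a different route (characterizing longest Lyndon words via lexicographic comparison of suffixes rather than by induction on the standard factorization), so your proof is also a genuinely alternative derivation of the cited lemma.
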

Thus,
as before, we have that
$\rho = x_{\rho} \lambda_{\rho} y_{\rho} = x'_{\rho} \lambda_{\rho}^k y_{\rho}$,
where
$\lambda_{\rho}^k = r_{\rho}$ is the L-root interval of $\rho$,
$x'_{\rho}$ is a possibly empty proper suffix of $\lambda_{\rho}$,
and that each  occurrence of $\lambda_{\rho}$ corresponds to a right node
in one the Lyndon trees.
Now, $|x_{\rho}| = |x'_{\rho}| + (k-1)|\lambda_{\rho}|$,
and we distribute this sum among each of the $k$ occurrences of the L-root
as follows: $|x'_{\rho}|$ for the leftmost occurrence
(i.e., the periodicity only extends $|x'_{\rho}|$ symbols to the left of the occurrence),
or $|\lambda_{\rho}|$ otherwise (i.e., the periodicity extends at least $|\lambda_{\rho}|$ symbols to the left of the occurrence).

Next, consider how long the periodicity can extend to the left of
each occurrence of $\lambda_\rho$ by looking at the Lyndon tree.
Since $\lambda_\rho$ corresponds to a right node,
$w_{\rho} = z_{\rho} \lambda_{\rho}$ for some
Lyndon words $w_{\rho}$ and $z_{\rho}$.
When $|z_{\rho}| \leq |\lambda_\rho|$,
$z_{\rho}$ cannot be a suffix of $\lambda_{\rho}$, since that would imply that
$w_{\rho} = z_{\rho} \lambda_{\rho} < \lambda_{\rho} < z_{\rho}$, a contradiction.
Thus, for the occurrence of L-root $\lambda_{\rho}$ in $w_{\rho}$,
the periodicity can extend at most
$|z_{\rho}|$ symbols (more precisely, $|z_{\rho}|-1$ symbols).

Let $\mathcal{S}(n)$ denote the maximum of the total sum of all $|x_{\rho}|$ for
all potential L-roots $\lambda_{\rho}$ that correspond to a right node in a
(single) Lyndon tree for any string of length $n$.
From the above arguments, we have
$\mathcal{S}(n) = 0$ if $n = 1$,
and otherwise,
$\mathcal{S}(n) \leq  \max \{\mathcal{S}(n_1) + \mathcal{S}(n_2) + \min\{ n_1, n_2 \} \mid n_1,n_2 > 0 \mbox{ and } n_1+n_2 = n \}.$
We can show by induction
that $\mathcal{S}(n)$ can be bounded by $\frac{n}{2}\log_2 n$.
\begin{lemma} \label{lem:Cnlqnlogn}
  $\mathcal{S}(n) \le \frac{n}{2}\log_2 n$.
\end{lemma}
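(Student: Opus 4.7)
The plan is to prove the bound by strong induction on $n$, directly from the recurrence for $\mathcal{S}$. The base case $n=1$ holds since $\mathcal{S}(1)=0=\frac{1}{2}\log_2 1$.

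For $n \ge 2$, assume the bound for all smaller values. Given any split $n = n_1+n_2$ with $n_1,n_2\ge 1$ and (without loss of generality) $n_1 \le n_2$, the induction hypothesis yields
\[
\mathcal{S}(n_1)+\mathcal{S}(n_2)+n_1 \;\le\; \tfrac{n_1}{2}\log_2 n_1 + \tfrac{n_2}{2}\log_2 n_2 + n_1,
\]
so it suffices to show, uniformly over such splits, that
\[
n_1 \log_2 n_1 + n_2 \log_2 n_2 + 2n_1 \;\le\; n\log_2 n,
\]
which I would rewrite as $n_1\log_2(n/n_1) + n_2\log_2(n/n_2) \ge 2n_1$.

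To establish this inequality, I would bound each of the two terms on the left by $n_1$ separately. For the first, since $n_1 \le n/2$ we have $n/n_1 \ge 2$, hence $n_1\log_2(n/n_1) \ge n_1$. For the second, I would write $n/n_2 = 1 + n_1/n_2$ with $n_1/n_2 \in (0,1]$ and invoke the elementary estimate $\log_2(1+x) \ge x$ on $[0,1]$, which follows from concavity of $\log_2(1+x)$ together with the matching endpoint values at $x=0$ and $x=1$. This gives $n_2\log_2(n/n_2) \ge n_2\cdot(n_1/n_2)= n_1$. Summing the two bounds yields exactly $2n_1$, closing the induction.

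The only non-bookkeeping step is the concavity-based estimate $\log_2(1+x)\ge x$ on $[0,1]$; beyond that, the argument is purely algebraic. I do not expect any real obstacle, since the recurrence $\mathcal{S}(n)\le\mathcal{S}(n_1)+\mathcal{S}(n_2)+\min\{n_1,n_2\}$ is structurally the familiar mergesort-style recurrence whose solution is well known to be $\Theta(n\log n)$, and the constant $\tfrac{1}{2}$ is precisely what the symmetric split $n_1=n_2=n/2$ forces (where the inductive inequality is tight).
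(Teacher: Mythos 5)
Your proof is correct, and it follows the same inductive skeleton as the paper's: induct on $n$ via the recurrence $\mathcal{S}(n)\le\max\{\mathcal{S}(n_1)+\mathcal{S}(n_2)+\min\{n_1,n_2\}\}$, reducing everything to the inequality $n_1\log_2 n_1+n_2\log_2 n_2+2\min\{n_1,n_2\}\le n\log_2 n$. Where you differ is in how that inequality is verified. The paper treats the left-hand side as a function of $k=\min\{n_1,n_2\}$ on $[1,n/2]$, notes that its second derivative is positive, and concludes the maximum is attained at $k=n/2$, where it equals $n\log_2 n$; strictly speaking convexity only places the maximum at one of the two endpoints, so the paper's argument implicitly also needs the comparison with $k=1$. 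You instead bound the two terms of $n_1\log_2(n/n_1)+n_2\log_2(n/n_2)$ separately: the first by $n_1$ since $n/n_1\ge 2$, the second by $n_1$ via $\log_2(1+x)\ge x$ on $[0,1]$ (concavity plus matching endpoints). This is entirely elementary, avoids calculus and the endpoint-comparison subtlety, and makes visible exactly where the bound is tight (both term-wise estimates become equalities precisely at $n_1=n_2=n/2$). Either route is fine; yours is arguably the cleaner write-up of the same induction.
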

\begin{proof}
Clearly, when $|n|$ = 1, $0 = \mathcal{S}(n) \leq \frac{1}{2}\log_2 1 = 0$.
  For $n \geq 2$, assume that the lemma holds for any value less than $n$.
  Then,
  \begin{eqnarray*}
    \mathcal{S}(n) &\leq&\max \left\{\mathcal{S}(n_1) + \mathcal{S}(n_2) + \min\{ n_1, n_2 \} \mid n_1,n_2 \neq 0 \mbox{ and } n_1+n_2 = n \right\}\\
    &\leq& \max \left\{ \frac{n - k_n}{2}\log_2 (n - k_n) + \frac{k_n}{2}\log_2 k_n + k_n ~\middle|~ 1 \le k_n \leq \frac{n}{2}\right\}\\
    &=& \frac{1}{2} \max \left\{  ((n - k_n)\log_2 (n - k_n) + k_n\log_2 k_n + 2k_n)~\middle|~ 1 \le k_n \leq \frac{n}{2}\right\}\\
    &\le& \frac{1}{2} \left(\left(n - \frac{n}{2}\right)\log_2 \left(n - \frac{n}{2}\right) + \frac{n}{2}\log_2 \frac{n}{2} + n\right)\\
&=&   \frac{1}{2} \left(n\log_2 \frac{n}{2} + n\right)  = \frac{n}{2} \log_2 n.
  \end{eqnarray*}
  The third inequality follows since the second derivative of the above function is positive
  and thus the function is maximized when $k_n =n/2$.
\end{proof}
Now, since any occurrence of an L-root corresponds to a right
node in one of the two Lyndon trees,
we have
\[\PS(n) \leq \max_{w\in\Sigma^n}\sum_{\rho\in\Runs(w)}|x_\rho| \leq 2\cdot \mathcal{S}(n) \leq n\log_2 n.\]

\section*{Acknowledgments}
We would like to thank the anonymous reviewers for pointing out and correcting errors in the submitted version of the paper.

This work was supported by JSPS KAKENHI Grant Numbers
JP20H04141 (HB), JP20J11983 (TM), and JP18K18002 (YN).

\end{document}